\documentclass{ifacconf}

\usepackage{graphicx}  

\usepackage{natbib}        

\usepackage{amsmath}
\usepackage{amssymb}
\usepackage{algorithm}
\usepackage{algpseudocode}
\usepackage{booktabs}
\usepackage{subcaption}

\usepackage{stmaryrd}
\usepackage{xcolor}

\newcommand{\sg}[1]{{\color{black} #1}}

\usepackage{microtype}
\newtheorem{assumption}{Assumption}

\newtheorem{proposition}{Proposition}
\newtheorem{remark}{Remark}
\newtheorem{lemma}{Lemma}

\begin{document}
\begin{frontmatter}

\title{{Data-Driven Network  LQG Mean Field Games with Heterogeneous Populations via Integral Reinforcement Learning}
\thanksref{footnoteinfo}}

\thanks[footnoteinfo]{This work is supported in part by FRQNT and NSERC (Canada).}

\author[First]{Jean Zhu} 
\author[Second]{Shuang Gao} 

\address[First]{Department of Mechanical Engineering, Polytechnique Montréal \& GERAD, Montreal, Canada (email: jean.zhu@etud.polymtl.ca)}
\address[Second]{Department of Electrical Engineering, Polytechnique Montréal \& GERAD, Montreal, Canada (email: shuang.gao@polymtl.ca)}

\allowdisplaybreaks
\begin{abstract}                
This paper establishes a data-driven solution for {infinite horizon} linear quadratic Gaussian Mean Field Games with network-coupled heterogeneous agent populations where the dynamics of the agents are unknown. The solution technique relies on  Integral Reinforcement Learning  and Kleinman's iteration for solving algebraic Riccati equations (ARE). The resulting algorithm uses trajectory data to generate network-coupled MFG strategies for agents and does not require  parameters of agents' dynamics. 
 {Under technical conditions  on the persistency of excitation and  on the existence of unique stabilizing solution to the corresponding AREs, the learned network-coupled MFG strategies are shown to converge to their true values. }
\end{abstract}

\begin{keyword} 
{Mean field games, data-driven control, integral reinforcement learning, 
adaptive dynamic programming,
multi-agent systems}
\end{keyword}
\end{frontmatter}

\section{Introduction}
Large-scale systems composed of heterogeneous agent populations, such as renewable energy grids with different types of renewable sources and various types of users, and autonomous vehicle networks with different types of vehicles, {naturally arise in the transition to sustainable energy systems.}Such systems motivate the modelling and the strategy design in this paper for heterogeneous populations of strategically competitive agents. 

To achieve tractable strategy design for large populations of competitive agents,  Mean Field Game (MFG) theory  was proposed by \cite{Huang:2006, Huang:2007} and independently by \cite{Lasry:2006, Lasry:2007}.
In the linear-quadratic-Gaussian (LQG) case, the MFG solution involves solving coupled Riccati equations in \citep{Huang:2007,Huang:2010}.
{MFG problems with multiple classes have been investigated in  \citep{Huang:2006, Huang:2007, Huang:2010} and MFGs with network interactions in e.g. \citep{Huang:2010locality,Gao:2023}.}

For MFGs  where the dynamics of agents are unknown,  several data-driven solutions have been established. In continuous-time settings, {adaptive control techniques with system identification  have been applied to MFGs with heterogeneous populations with mean field cost couplings by \cite{Kizilkale:2012}},  and  integral reinforcement learning (IRL) has been used by \cite{Xu:23, Xu:25,Li:2025} to generate the data-driven strategies for LQG mean field game problems  without system identification,
where a homogeneous population of agents with mean field coupling is assumed.  
For discrete-time MFGs, standard reinforcement learning techniques  have been employed in \citep{Jayakumar:2019,Guo:2019,Fu:2019, Zaman:2020, Zaman:2023,Angiuli:2022}.

 IRL  was developed to generate data-driven optimal control solutions for continuous-time systems with unknown dynamics (e.g.  partially unknown nonlinear dynamics \citep{Vrabie:2009},  completely unknown linear dynamics \citep{Jiang:2012} and stochastic dynamics \citep{Li:2022}). Such a data-driven technique also applies to multi-agent systems with unknown dynamics in \citep{Vamvoudakis:2011} and cases with heterogeneous agents in \citep{Modares:2016}.
 In the context of mean field control with agent populations, IRL was used for  continuous-time problems in \citep{Xu:2025b} whereas standard reinforcement learning has been empolyed in \citep{Jayakumar:2019,Carmona:2019a,Angiuli:2022} for discrete-time problems.

\textbf{Contribution}: This current  paper establishes (a) the LQG-MFG strategies with
multi-class heterogeneous populations with inter-class {network} couplings and  (b) an IRL algorithm for learning the agent strategies from trajectory data,  extending the IRL algorithm for homogeneous LQG-MFG by \cite{Xu:23,Xu:25}.  The heterogeneity of agent classes and the network interaction lead to a new  set of algebraic Riccati equations (ARE), one for each class, and one capturing cross-class interactions, all of which can be learned simultaneously using the proposed algorithm from trajectory data without knowing the parameters of the underlying system dynamics. 

\textbf{Notation {and Definition}}:
$\mathbb{R}$ and $\mathbb{N}$ denote the set of real and nonzero natural numbers respectively. For a random variable $x$, $\bar{x}$ denotes its expectation $\bar{x} = \mathbb{E}\left[x\right]$. For a matrix $A$, $A^\top$ denotes its transpose. For any $n,m \in \mathbb{N}$, let $I_n \in \mathbb{R}^{n \times n}$ denote the identity matrix, $\mathbf{0}_{n\times m} \in \mathbb{R}^{n \times m}$ the zero matrix, and $\mathbf{1}_{n \times m} \in \mathbb{R}^{n\times m}$ the matrix of ones.  For a matrix $A \in \mathbb{R}^{m\times n}$, $\text{vec}(A)$ corresponds to the $mn$-dimensional vector formed by stacking the columns of $A$ on top of each other. $\otimes $ denotes the Kronecker product between two matrices. $A \succeq 0$ (resp. $A \succ 0$) denotes that matrix $A$ is positive semidefinite (resp. positive definite). $\text{diag}(M_1, \ldots, M_N)$  denotes the  matrix with diagonal blocks $M_1, \cdots, M_N$  and zero elsewhere. 
For $K, n_k, m_l, N, M \in \mathbb{N}$, $\mathcal{M} = \llbracket M_{kl}\rrbracket \in \mathbb{R}^{N \times M}$, where $N = \sum_{k=1}^Kn_k$ and $M = \sum_{l=1}^Km_l$, denotes a block matrix with blocks $M_{kl} \in \mathbb{R}^{n_k \times m_l}$, with $k, l \in \{1,...,K\}$.

  Consider {a matrix $A\in \mathbb{R}^{k \times k}$. $A$ is called \emph{strong ($k_1, k_2$) c-splitting} if, of its $k=k_1+k_2$ eigenvalues, it has $k_1$ eigenvalues in the open left half plane, and $k_2$ in the open right half plane, with none on the imaginary axis. 
 An $l$-dimensional subspace (with $l\leq k$) of  $A$  denoted by $\mathcal{V}$ of $\mathbb{R}^k$ is  an \emph{invariant subspace}  if $A\mathcal{V}\subset \mathcal{V}$; in this case, $AV = V A_0$  for some $A_0 \in \mathbb{R}^{l\times l}$ where $V \in \mathbb{R}^{k\times l}$ and $\text{span}{(V)} = \mathcal{V}$. If $A_0$ is Hurwitz, $\mathcal{V}$ is called a \emph{stable invariant subspace}.
    An $l$-dimensional  subspace (with $l\leq k$) of  $A\in \mathbb{R}^{k \times k}$ is called a \emph{graph subspace} if it's spanned by the columns of a $k\times l$ matrix whose first $l$ rows form an invertible submatrix (see \cite{Huang:2019}).

\section{Problem formulation}
Consider a population of agents  grouped into $\mathcal{K}$ classes, where each class $k \in \{ 1, \ldots, \mathcal{K\}}$ contains a large number of identical agents. Let $c_k$ denotes the number of agents in class $k$. An agent $a_{k,i}$ of class $k$ has dynamics given by 
\begin{equation}
    dx_{k,i}(t) = (A_{k}x_{k,i}(t) + B_ku_{k,i}(t))dt + D_kdw_{k,i}(t)
\end{equation}
where $x_{k,i}(t) \in \mathbb{R}^{n_k}$ and $u_{k,i}(t) \in \mathbb{R}^{m_k}$ are the states and control inputs respectively, and $w_{k,i}(t) \in \mathbb{R}^{d_k}$ is a standard Wiener process. $A_k$, $B_k$ and $D_k$ are system matrices for class $k$, which are all assumed to be unknown. For simplicity of presentation, we omit the time index hereafter.
An agent $a_{k,i}$ seeks to find the control $u_{k,i}$ that minimizes the class-specific discounted cost functional 
\begin{equation}
J_k\sg{(u_{k,i})} = \mathbb{E} \int_0^\infty e^{-\rho t} \left( \tilde{x}_{k,i}^\top Q_k \tilde{x}_{k,i} + u_{k,i}^\top R_k u_{k,i} \right) dt
\end{equation}
where $\tilde{x}_{k,i} = x_{k,i} - \sum_{m=1}^{\mathcal{K}}H_{km}\Bar{x}_m$ is the tracking error with $\bar{x}_m = \frac{1}{c_m}\sum_{i=1}^{c_m} x_{m,i}$, $Q_k=Q_k^\top \succeq 0$ and $R_k = R_k^\top \succ 0$ the class-specific state cost and control cost matrices respectively, $\rho > 0$ is the discount factor, and $H_{km} \in \mathbb{R}^{n_k \times n_m}$ is a class-mean coupling matrix representing how the states of an agent of class $k$ depends on the means of the states of class $m$. The resulting $\mathcal{H} = \llbracket H_{km}\rrbracket \in \mathbb{R}^{N \times N}$ with $N = \sum_{k=1}^{\mathcal{K}}n_k$ is the \sg{network} coupling matrix.

\begin{remark}
In the special case where the agents have homogeneous state dimensions $n_k = n$ for all $k$ and are to track the global mean field $\Bar{x}(t)$, defined as the convex combination of the mean of all classes
\begin{equation}
    \Bar{x}(t)= \sum_{k=1}^{\mathcal{K}}\pi_k \Bar{x}_k(t)\text{,}\qquad\pi_k > 0\text{, }\quad\sum_{k=1}^{\mathcal{K}}\pi_k = 1
\end{equation}
where $\Bar{x}_k(t)$ is the mean of all agents belonging to class $k$ and $\pi_k$ are scalar weights, then the coupling matrices $H_{km}$ are to be chosen such that 
\begin{equation*}
    H_{km} = \pi_mI_n ~ \text{\;for all $k, m \in \{1,\ldots,\mathcal{K}\}$}.
\end{equation*}
\end{remark}

\begin{assumption} \label{assum:A_B_stabi_A_Q_Obs}
    The pair ($A_k-\frac{1}{2}\rho I_n, B_k$) is stabilizable and the pair ($A_k-\frac{1}{2}\rho I_n, Q_k^{\frac{1}{2}}$) is observable for all class $k \in \{1, ..., \mathcal{K}\}$\sg{.}
\end{assumption}
The following block diagonal matrices are then defined.
\begin{equation*}
\begin{aligned}
    \mathcal{A} &= \text{diag}(A_1, 
    \ldots, A_\mathcal{K}), & \mathcal{B} &= \text{diag}(B_1, \ldots, B_\mathcal{K}),  \\
    \mathcal{Q} &= \text{diag}(Q_1, \ldots, Q_\mathcal{K}), & \mathcal{R} &= \text{diag}(R_1, \ldots, R_\mathcal{K}),
\end{aligned}
\end{equation*}
where $\mathcal{A}, \mathcal{Q} \in \mathbb{R}^{N\times N}$, $\mathcal{B} \in \mathbb{R}^{N\times M}$, and $\mathcal{R}\in \mathbb{R}^{M\times M}$  with 
$$M = \sum_{k=1}^\mathcal{K}m_k \quad \text{and}\quad  N = \sum_{k=1}^{\cal K}n_k.$$
\begin{remark}
Clearly $\mathcal{R}\succ 0$. 
   In addition, under Assumption~\ref{assum:A_B_stabi_A_Q_Obs}, the pair ($\mathcal{A}-\frac{1}{2}\rho I_{N}, \mathcal{B}$) is stabilizable and the pair ($\mathcal{A}-\frac{1}{2}\rho I_{N}, \mathcal{Q}$) is observable  as an immediate consequence.
\end{remark}

\begin{assumption} \label{assum:Hamiltonian_Q}
   The eigenvalues of the Hamiltonian matrix $H_{\Omega}\in \mathbb{R}^{2N \times 2N}$
   \sg{defined by}
   \begin{equation}\label{eq:matrixHomega}
      H_{\Omega} = \begin{bmatrix}
         \mathcal{A}- \frac{1}{2}\rho I_N & -\mathcal{BR}^{-1}\mathcal{B}^{\top} \\
          -\mathcal{Q}(I_N-\mathcal{H}) & \quad  -\mathcal{A}^{\top} + \frac{1}{2}\rho I_N
      \end{bmatrix} \in \mathbb{R}^{2N \times 2N}
   \end{equation}
   are strong ($N, N$) c-splitting, and the associated $N$-dimensional stable invariant subspace is a graph subspace.
\end{assumption}

{Following the fixed point approach in \citep{Huang:2007,Huang:2019}, we let the population size in all classes go to infinity and solve the corresponding limit problem, which generates the following MFG strategy. } 
\begin{proposition} 
\label{prop:key_equations}
Under \sg{Assumptions} \ref{assum:A_B_stabi_A_Q_Obs} and \ref{assum:Hamiltonian_Q}, the 
MFG strategy 
of a generic agent $a_{k, i}$ in class $k\in\{1,..., \mathcal{K}\}$ exists and is uniquely given by 
\begin{equation}\label{eq:optimal_control_sol}
    u^{*}_{k,i}(t) = -R_k^{-1}B_k^\top(P_kx_{k,i} + s_k)
\end{equation}
where $P_k \succ 0$ and $s_k = \sum_{m=1}^{\mathcal{K}}\Pi_{km}\Bar{x}_m$ follow from the solutions to the following algebraic equations 
\begin{subequations}
\label{eq:system_dynamics}
\begin{align}
    \rho\mathcal{P} &= \mathcal{Q} + \mathcal{PA} + \mathcal{A}^\top\mathcal{P} - \mathcal{PBR}^{-1}\mathcal{B}^\top\mathcal{P} \label{eq:dyn_p} \\
    \begin{split}
        \rho\Pi &= - \mathcal{QH}+\Pi(\mathcal{A}-\mathcal{BR}^{-1}\mathcal{B}^\top(\mathcal{P}+\Pi)) \\
        &\quad + (\mathcal{A}^\top - \mathcal{PBR}^{-1}\mathcal{B}^\top)\Pi 
    \end{split} \label{eq:dyn_pi}
\end{align}
\end{subequations} 
and the mean field dynamics
\begin{equation}\label{eq:meanFieldDynamics}
    \dot{\bar{\mathcal{X}}} = (\mathcal{A}-\mathcal{BR}^{-1}\mathcal{B}^\top(\mathcal{P}+\Pi))\bar{\mathcal{X}}
\end{equation}
with  $\mathcal{P}\in \mathbb{R}^{N\times N}$, $\bar{X} \in \mathbb{R}^{N}$ given by
\begin{equation*}
\begin{aligned}
    \mathcal{P} &= \text{diag}(P_1, \ldots, P_\mathcal{K}),&\bar{\mathcal{X}} &= [\bar{x}_1^\top, \ldots, \bar{x}_\mathcal{K}^\top]^\top,
\end{aligned}
\end{equation*}
 and $\Pi = \llbracket \Pi_{km}\rrbracket \in \mathbb{R}^{N\times N}$. 
\end{proposition}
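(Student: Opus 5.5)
We follow the fixed-point approach of \citep{Huang:2007,Huang:2019}. First we solve the limiting single-agent tracking problem for a given deterministic mean-field trajectory, then we impose consistency.

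\textbf{Step 1 (best response to a given mean field).} Fix a deterministic, exponentially bounded class-mean trajectory $\bar{\mathcal{X}}(t)$ with growth rate below $\rho/2$. A generic agent $a_{k,i}$ then faces a discounted LQ tracking problem with reference $\sum_m H_{km}\bar x_m$. Completing the square with the value function $V(x,t)=x^\top P_k x+2s_k(t)^\top x+r_k(t)$ gives three conditions:
\begin{itemize}
\item $P_k$ must solve the discounted ARE $\rho P_k=Q_k+P_kA_k+A_k^\top P_k-P_kB_kR_k^{-1}B_k^\top P_k$. By Assumption~\ref{assum:A_B_stabi_A_Q_Obs}, this ARE (equivalently the standard ARE for $A_k-\frac{\rho}{2}I$) has a unique solution $P_k\succ 0$ making $A_k-\frac{\rho}{2}I-B_kR_k^{-1}B_k^\top P_k$ Hurwitz. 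Stacking over $k$ gives \eqref{eq:dyn_p} with block-diagonal $\mathcal{P}$.
\item The offset must satisfy $\dot s_k=(\rho I-A_k^\top+P_kB_kR_k^{-1}B_k^\top)s_k+Q_k\sum_m H_{km}\bar x_m$, taking the unique bounded (discounted) solution. This is an anti-stable ODE, so its solution is the backward integral against the future reference.
\item The optimal control is then $u^*_{k,i}=-R_k^{-1}B_k^\top(P_kx_{k,i}+s_k)$, and it is the unique minimizer by the standard verification argument.
\end{itemize}

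\textbf{Step 2 (consistency).} Averaging the closed-loop dynamics over agents of class $m$ as $c_m\to\infty$, using the independence of the Wiener processes and the law of large numbers, gives the stacked mean dynamics
$$\dot{\bar{\mathcal X}}=(\mathcal A-\mathcal{BR}^{-1}\mathcal B^\top\mathcal P)\bar{\mathcal X}-\mathcal{BR}^{-1}\mathcal B^\top \mathcal S,$$
where $\mathcal S=[s_1^\top,\dots,s_{\mathcal K}^\top]^\top$ and
$$\dot{\mathcal S}=(\rho I-\mathcal A^\top+\mathcal{PBR}^{-1}\mathcal B^\top)\mathcal S+\mathcal{QH}\bar{\mathcal X}.$$
We look for a linear feedback solution $\mathcal S=\Pi\bar{\mathcal X}$. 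Substituting and matching coefficients gives exactly \eqref{eq:dyn_pi}, with the mean field obeying \eqref{eq:meanFieldDynamics}. Equivalently, set $Y=\mathcal{P}\bar{\mathcal X}+\mathcal S$ and shift by $e^{-\rho t/2}$. Then $(\bar{\mathcal X},Y)$ satisfies the linear system whose matrix is $H_\Omega$ in \eqref{eq:matrixHomega}, since $\mathcal Q+\dots$ combines with $-\mathcal{QH}$ to give $-\mathcal Q(I-\mathcal H)$. Moreover $\mathcal P+\Pi$ is a stabilizing solution of the nonsymmetric ARE
$$\rho(\mathcal P+\Pi)=\mathcal Q(I-\mathcal H)+(\mathcal P+\Pi)\mathcal A+\mathcal A^\top(\mathcal P+\Pi)-(\mathcal P+\Pi)\mathcal{BR}^{-1}\mathcal B^\top(\mathcal P+\Pi),$$
which follows by adding \eqref{eq:dyn_p} and \eqref{eq:dyn_pi}.

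\textbf{Step 3 (existence and uniqueness via Assumption~\ref{assum:Hamiltonian_Q}).} This is the main obstacle. By the strong $(N,N)$ c-splitting property, $H_\Omega$ has a unique $N$-dimensional stable invariant subspace. Because it is a graph subspace, it is spanned by $[I_N;\,X]$ with $H_\Omega[I;X]=[I;X]A_0$ and $A_0$ Hurwitz. Reading off the two block rows shows that $X$ solves the nonsymmetric ARE above, with $\mathcal A-\frac{\rho}{2}I-\mathcal{BR}^{-1}\mathcal B^\top X=A_0$ Hurwitz. Setting $\Pi:=X-\mathcal P$ therefore yields a solution of \eqref{eq:dyn_pi}, and the resulting mean field \eqref{eq:meanFieldDynamics} decays faster than $e^{\rho t/2}$. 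This guarantees that the bounded discounted solution $s_k$ from Step 1 is well defined and coincides with $\Pi\bar{\mathcal X}$, which closes the fixed point.

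For uniqueness, any consistent (discount-admissible) pair $(\bar{\mathcal X},Y)$ must stay in the stable subspace of $H_\Omega$, since components in the unstable subspace blow up after discounting. Hence $Y=X\bar{\mathcal X}$, and the solution is determined by $\bar{\mathcal X}(0)$. The delicate point is to show rigorously that the forward–backward consistency system admits no other bounded solutions. We would handle this with the spectral decomposition of $H_\Omega$ into its stable and antistable parts, which is exactly what the splitting assumption provides. Combining this uniqueness with the uniqueness of $P_k$ gives a unique strategy of the form \eqref{eq:optimal_control_sol}.
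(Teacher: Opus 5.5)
Your proposal is correct, and it reaches the result by a different route from the paper's. The paper's proof (Appendix~A) assumes from the start that the mean field obeys $\dot{\bar{\mathcal X}}=\bar G\bar{\mathcal X}$. It augments the state to $z_{k,i}=[x_{k,i}^\top,\bar{\mathcal X}^\top]^\top$, solves the resulting LQR with a quadratic ansatz $z_{k,i}^\top\tilde P_k z_{k,i}$, and reads off the block equations for $P_{11,k}$ and $P_{12,k}$. It then imposes $\bar G=\mathcal A-\mathcal{BR}^{-1}\mathcal B^\top(\mathcal P_{11}+\mathcal P_{12})$ and gets existence and uniqueness of the stabilizing $\Omega$ by citing \citep[Thm.~18]{Huang:2019}, with Lemma~\ref{lem:HOmega_equivalency} transferring the Hamiltonian conditions between $H_\Omega$ and $H_\Pi$.

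You instead fix an arbitrary admissible mean-field path, derive the anti-stable offset equation for $s_k(t)$, and obtain the forward--backward system for $(\bar{\mathcal X},\mathcal S)$. Your substitution $Y=\mathcal P\bar{\mathcal X}+\mathcal S$, together with the $e^{-\rho t/2}$ shift, is exactly the similarity $\left[\begin{smallmatrix} I_N & 0\\ \mathcal P & I_N\end{smallmatrix}\right]$ of that lemma, carrying the $H_\Pi$-system to the $H_\Omega$-system. You then read $X=\mathcal P+\Pi$ directly off the stable graph subspace, which effectively re-proves the cited theorem.

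What your route buys is a self-contained and stronger uniqueness claim. The form $\mathcal S=\Pi\bar{\mathcal X}$ is derived rather than assumed, and uniqueness holds over all discount-admissible mean-field trajectories. The paper only certifies uniqueness within the class of linear drifts $\bar G$. In exchange, the paper's route is shorter and also yields the full quadratic value function.

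The step you flag as delicate is routine. For your Step~1 solution, $e^{-\rho t/2}(\bar{\mathcal X},Y)$ is bounded, because $\bar{\mathcal X}$ grows at rate below $\rho/2$ and $\rho I-(\mathcal A-\mathcal{BR}^{-1}\mathcal B^\top\mathcal P)^\top$ has spectrum with real parts above $\rho/2$. Since $H_\Omega$ has no imaginary-axis eigenvalues, boundedness forces the initial vector into $\operatorname{span}\left[\begin{smallmatrix} I_N\\ X\end{smallmatrix}\right]$. The given $\bar{\mathcal X}(0)$ then pins $Y(0)=X\bar{\mathcal X}(0)$.

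There is one small ordering slip: in Step~2 you call $\mathcal P+\Pi$ stabilizing, but that is only established by the construction in Step~3.
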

\sg{See Appendix A for the detailed proof.} 

By defining $\Omega = \mathcal{P}+\Pi$ and summing the two equations in \eqref{eq:system_dynamics}, the coupled equations are reduced to a set of decoupled Algebraic Riccati Equations (AREs) in $\mathcal{P}$ and $\Omega$:
\begin{subequations}
\label{eq:system_dynamics_contracted}
\begin{align}
       \rho\mathcal{P} &= \mathcal{Q} + \mathcal{PA} + \mathcal{A}^\top\mathcal{P} - \mathcal{PBR}^{-1}\mathcal{B}^\top\mathcal{P} \label{eq:ARE_P}\\
    \rho\Omega &= \mathcal{Q}(I_N-\mathcal{H}) + \Omega\mathcal{A} + \mathcal{A}^\top\Omega - \Omega\mathcal{BR}^{-1}\mathcal{B}^\top\Omega \label{eq:ARE_Omega}
\end{align}
\end{subequations}
where $\mathcal{P}$ and $\Omega$ can be solved simultaneously and $\Pi$ obtained by $\Pi = \Omega - \mathcal{P}$.

 {A solution to \eqref{eq:ARE_Omega} is called \emph{stabilizing} if the closed-loop matrix $\mathcal{A}-\frac{1}{2}\rho I_{N}-\mathcal{BR}^{-1}\mathcal{B}^\top \Omega$ is Hurwitz.}
Assumption~\ref{assum:Hamiltonian_Q}  ensures the existence of a unique stabilizing solution $\Omega$ to \eqref{eq:ARE_Omega}  following   {\citep[Thm.~18]{Huang:2019}}.}

Since all involved matrices in \eqref{eq:ARE_P} are block diagonal, $\mathcal{P}$ can be solved class by class by simultaneously solving for $P_k$ for each class $k$, where $P_k$ is the solution to the following class-specific ARE
\begin{equation} \label{eq:class_spec_ARE}
    \rho P_k = Q_k + P_kA_k + A_k^\top P_k - P_kB_k R_k^{-1}B_k^\top P_k \sg{.}
\end{equation}

\section{Data-driven multi-class LQG-MFG with completely unknown dynamics}
To establish the  multi-class LQG-MFG strategy purely based on trajectory data, we adapt the data-driven approach developed by \cite{Jiang:2012} for linear quadratic control problems and later applied to single class LQG-MFGs by \cite{Xu:23, Xu:25}. This approach combines the Kleinman algorithm \citep{Kleinman:1968} for iteratively solving symmetric ARE and the IRL technique for updating strategies based on trajectory data. 

 \sg{Since the data-driven method  based on \citep{Jiang:2012,Xu:23} requires the ARE to be symmetric, we introduce the following assumption.}
\begin{assumption} \label{assum:QI_H_sym}
    The matrix $\mathcal{Q}(I_N-\mathcal{H})$ is symmetric.
\end{assumption}

 \textbf{Kleinman's Iteration Procedure:}    Under Assumptions \ref{assum:A_B_stabi_A_Q_Obs},\ref{assum:Hamiltonian_Q}  and \ref{assum:QI_H_sym}, initially stabilizing gains $L_{P,k}^{(0)}$ and $\mathcal{L}_\Omega^{(0)}$ can be selected. An iterative procedure is then formed to solve equation \eqref{eq:class_spec_ARE} for $P_k$, for all $k$, and \eqref{eq:ARE_Omega} for $\Omega$. 
    Equations \eqref{eq:class_spec_ARE} and \eqref{eq:ARE_Omega} can thus be solved iteratively by
\begin{subequations} 
\label{eq:Kleinman}
\begin{align}
    \rho P_{k}^{(\ell)} &= 
        P_{k}^{(\ell)}(A_k-B_k L_{P,k}^{(\ell-1)}) + (A_k - B_kL_{P,k}^{(\ell-1)})^\top P_{k}^{(\ell)} \notag\\
        &+ (L_{P,k}^{(\ell-1)})^\top R_kL_{P,k}^{(\ell-1)} + Q_k \label{eq:P_update}\\
     \rho\Omega^{(\ell)} &=
        \Omega^{(\ell)}(\mathcal{A}-\mathcal{BL}_\Omega^{(\ell-1)}) + (\mathcal{A} - \mathcal{BL}_\Omega^{(\ell-1)})^\top\Omega^{(\ell)} \notag\\
         &+ (\mathcal{L}_\Omega^{(\ell-1)})^\top\mathcal{R}\mathcal{L}_\Omega^{(\ell-1)} + \mathcal{Q}(I_N-\mathcal{H}) \label{eq:Omega_update}
\end{align}
\end{subequations}
where $L_{P,k}^{(\ell)} = R_k^{-1}B_k^\top P_k^{(\ell)}$ and $\mathcal{L}_\Omega^{(\ell)} = \mathcal{R}^{-1}\mathcal{B}^\top \Omega^{(\ell)}$. $\mathcal{P}$ can then be reconstructed by $\mathcal{P} = \text{diag}(P_1, \ldots, P_\mathcal{K})$.
\begin{proposition}
Under Assumptions \ref{assum:A_B_stabi_A_Q_Obs},\ref{assum:Hamiltonian_Q}  and \ref{assum:QI_H_sym},  the resulting matrices $P_k^{(\ell)}$, $\Omega_k^{(\ell)}$, $L_{P,k}^{(\ell)}$ and $\mathcal{L}_\Omega^{(\ell)}$  from the Kleinman's iteration procedure above  satisfy the following properties: 
    \begin{enumerate}
    \item $A_k-\frac{1}{2}\rho I_{n_k}-B_kL_{P,k}^{(\ell)}$ and $\mathcal{A}-\frac{1}{2}\rho I_{N} - \mathcal{BL}_{\Omega}^{(\ell)}$ are Hurwitz,
    \item $P_k^* \preceq P_k^{(\ell+1)} \preceq P_k^{(\ell)}$ and $\Omega^* \preceq \Omega^{(\ell+1)} \preceq \Omega^{(\ell)}$,
    \item $\lim_{\ell\rightarrow\infty}L_{P,k}^{(\ell)} = L_{P,k}^*$, $\lim_{\ell\rightarrow\infty}P_k^{(\ell)}=P_k^*$, \\ $\lim_{\ell\rightarrow\infty}\mathcal{L}_{\Omega}^{(\ell)} = \mathcal{L}_{\Omega}^*$, and $\lim_{\ell\rightarrow\infty}\Omega^{(\ell)}=\Omega^*$,\hfill $\square$
\end{enumerate}
where $P_k^*$ is the unique positive definite solution to the ARE \eqref{eq:class_spec_ARE} for class $k \in \{1,2,..., \mathcal{K}\}$ and ${\Omega}^*$ the unique stablizing solution to \eqref{eq:ARE_Omega}.
\end{proposition}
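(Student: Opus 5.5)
The plan is to reduce everything to Kleinman's classical theorem \citep{Kleinman:1968} through the discount shift $\tilde A_k = A_k - \frac{1}{2}\rho I_{n_k}$ and $\tilde{\mathcal{A}} = \mathcal{A}-\frac{1}{2}\rho I_N$. With this shift, \eqref{eq:P_update} becomes the Lyapunov equation
\[
(\tilde A_k - B_kL_{P,k}^{(\ell-1)})^\top P_k^{(\ell)} + P_k^{(\ell)}(\tilde A_k - B_kL_{P,k}^{(\ell-1)}) + Q_k + (L_{P,k}^{(\ell-1)})^\top R_k L_{P,k}^{(\ell-1)} = 0,
\]
and \eqref{eq:Omega_update} takes the same form with $\mathcal{Q}(I_N-\mathcal{H})$ in place of $Q_k$. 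Likewise \eqref{eq:class_spec_ARE} and \eqref{eq:ARE_Omega} become undiscounted AREs for the pairs $(\tilde A_k,B_k)$ and $(\tilde{\mathcal{A}},\mathcal{B})$. For the $P_k$ iteration, Assumption~\ref{assum:A_B_stabi_A_Q_Obs} supplies exactly Kleinman's hypotheses: stabilizability, observability and $Q_k\succeq 0$. Claims 1 to 3 for $P_k^{(\ell)}$ and $L_{P,k}^{(\ell)}$ then follow directly, with $P_k^*\succ 0$ the unique positive definite solution.

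The $\Omega$ iteration cannot be handled by simply citing Kleinman, because $\mathcal{Q}(I_N-\mathcal{H})$ need not be positive semidefinite. It is symmetric by Assumption~\ref{assum:QI_H_sym}, which is what makes each Lyapunov solution $\Omega^{(\ell)}$ symmetric. I will therefore redo Kleinman's argument, replacing positivity of the solution with the stabilizing solution $\Omega^*$ from Assumption~\ref{assum:Hamiltonian_Q} and \citep[Thm.~18]{Huang:2019}. Write $F_\ell = \tilde{\mathcal{A}}-\mathcal{B}\mathcal{L}_\Omega^{(\ell)}$ and assume inductively that $F_{\ell-1}$ is Hurwitz, so that $\Omega^{(\ell)}$ is the unique symmetric solution of the Lyapunov equation. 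Subtracting the ARE for $\Omega^*$ and completing squares gives
\[
F_{\ell-1}^\top(\Omega^{(\ell)}-\Omega^*) + (\Omega^{(\ell)}-\Omega^*)F_{\ell-1} + (\mathcal{L}_\Omega^{(\ell-1)}-\mathcal{L}_\Omega^*)^\top\mathcal{R}(\mathcal{L}_\Omega^{(\ell-1)}-\mathcal{L}_\Omega^*) = 0 .
\]
Since $F_{\ell-1}$ is Hurwitz, this yields $\Omega^{(\ell)}\succeq\Omega^*$.

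The next step is to show that $F_\ell$ is Hurwitz without observability. Rewrite the Lyapunov equation in the closed-loop form
\[
F_\ell^\top(\Omega^{(\ell)}-\Omega^*) + (\Omega^{(\ell)}-\Omega^*)F_\ell + (\mathcal{L}_\Omega^{(\ell)}-\mathcal{L}_\Omega^{(\ell-1)})^\top\mathcal{R}(\mathcal{L}_\Omega^{(\ell)}-\mathcal{L}_\Omega^{(\ell-1)}) + (\mathcal{L}_\Omega^{(\ell)}-\mathcal{L}_\Omega^*)^\top\mathcal{R}(\mathcal{L}_\Omega^{(\ell)}-\mathcal{L}_\Omega^*) = 0,
\]
obtained by adding and subtracting terms. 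Take an eigenpair $F_\ell v=\lambda v$ with $\mathrm{Re}\lambda\ge 0$. Because $\Omega^{(\ell)}-\Omega^*\succeq 0$, the quadratic form gives $2\mathrm{Re}\lambda\, v^*(\Omega^{(\ell)}-\Omega^*)v\le 0$, which forces $(\mathcal{L}_\Omega^{(\ell)}-\mathcal{L}_\Omega^*)v=0$. Then $F^* v = F_\ell v = \lambda v$, where $F^*=\tilde{\mathcal{A}}-\mathcal{B}\mathcal{L}_\Omega^*$ is Hurwitz, a contradiction. This establishes claim 1.

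For monotonicity, compare consecutive Lyapunov equations to obtain
\[
F_\ell^\top(\Omega^{(\ell)}-\Omega^{(\ell+1)}) + (\Omega^{(\ell)}-\Omega^{(\ell+1)})F_\ell + (\mathcal{L}_\Omega^{(\ell)}-\mathcal{L}_\Omega^{(\ell-1)})^\top\mathcal{R}(\cdot)=0 ,
\]
so $\Omega^{(\ell+1)}\preceq\Omega^{(\ell)}$. Together with the lower bound $\Omega^*$, this is claim 2. A monotone sequence of symmetric matrices bounded below converges to some $\bar\Omega$. Passing to the limit in \eqref{eq:Omega_update} shows that $\bar\Omega$ solves \eqref{eq:ARE_Omega}.

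The main obstacle is showing $\bar\Omega=\Omega^*$, because the limit of Hurwitz matrices is only guaranteed to be marginally stable. I would argue as follows. Since $\bar\Omega\succeq\Omega^*$, subtracting the two AREs gives a Lyapunov-type identity for $\Delta=\bar\Omega-\Omega^*$ under $\bar F=\tilde{\mathcal{A}}-\mathcal{B}\mathcal{R}^{-1}\mathcal{B}^\top\bar\Omega$. Combining this with the strong c-splitting of $H_\Omega$, which excludes imaginary-axis closed-loop eigenvalues for any ARE solution whose graph is $H_\Omega$-invariant, shows that $\bar F$ is Hurwitz. Uniqueness of the stabilizing solution then gives $\bar\Omega=\Omega^*$, and hence $\mathcal{L}_\Omega^{(\ell)}\to\mathcal{L}_\Omega^*$. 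This completes claim 3.
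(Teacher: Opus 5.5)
Your treatment of the $P_k$ iteration is the same as the paper's: a direct appeal to \citep{Kleinman:1968}. For $\Omega$ the routes differ. The paper gives no argument of its own and simply invokes \citep[Lemma 4.1]{Xu:25} for Kleinman's iteration on symmetric indefinite AREs, whereas you reprove that result. You anchor the iteration at the stabilizing solution $\Omega^*$ rather than at positive semidefiniteness, and the comparison identity gives $\Omega^{(\ell)}\succeq\Omega^*$. The eigenvector argument with $F^*$ Hurwitz then replaces the positivity/observability step of Kleinman's proof in propagating stability. Monotonicity comes from the consecutive-difference identity, and the limit is identified through its spectrum. I checked the identities and they are correct.

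What your route buys is transparency. For the $\Omega$ part it uses only Assumption~\ref{assum:QI_H_sym}, the existence of a stabilizing solution (Assumption~\ref{assum:Hamiltonian_Q} via \citep[Thm.~18]{Huang:2019}), and a stabilizing $\mathcal{L}_\Omega^{(0)}$; the observability in Assumption~\ref{assum:A_B_stabi_A_Q_Obs} plays no role. The paper's citation is shorter but leaves the matching of hypotheses with \citep{Xu:25} to the reader.

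One line is missing. Your completions of squares, e.g.\ $\Omega^*\mathcal{B}\mathcal{L}_\Omega^{(\ell-1)}=(\mathcal{L}_\Omega^*)^\top\mathcal{R}\mathcal{L}_\Omega^{(\ell-1)}$, use $\Omega^*=(\Omega^*)^\top$, which is not automatic for a solution of an ARE. It does hold here. By Assumption~\ref{assum:QI_H_sym}, $JH_\Omega$ is symmetric for $J=\begin{bmatrix}0&I_N\\ -I_N&0\end{bmatrix}$. With $V=\begin{bmatrix}I_N\\ \Omega^*\end{bmatrix}$ and $H_\Omega V=VF^*$, this gives $(V^\top JV)F^*+(F^*)^\top(V^\top JV)=0$. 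Since $F^*$ is Hurwitz, $V^\top JV=\Omega^*-(\Omega^*)^\top=0$.

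Finally, the step you flag as the main obstacle is easier than you suggest. Subtracting the AREs for $\bar\Omega$ and $\Omega^*$ gives the Sylvester equation $\bar F^\top(\bar\Omega-\Omega^*)+(\bar\Omega-\Omega^*)F^*=0$. Here $\sigma(\bar F)$ lies in the closed left half-plane, as a limit of the Hurwitz $F_\ell$, while $\sigma(-F^*)$ lies in the open right half-plane. So the only solution is $\bar\Omega=\Omega^*$, with no appeal to c-splitting or a separate uniqueness result.
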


\begin{pf}
Since the class-specific AREs (\ref{eq:class_spec_ARE}) form $\mathcal{K}$ independent standard Riccati equations associated with LQR problems, the convergence of the iteration for $P_k$ follows directly from \citep{Kleinman:1968}. The ARE for $\Omega$ in (\ref{eq:ARE_Omega}) corresponds to a global LQR problem with a state penalty matrix $\mathcal{Q}(I_{N}-\mathcal{H})$, which may be indefinite. Under Assumptions \ref{assum:A_B_stabi_A_Q_Obs}, \ref{assum:Hamiltonian_Q}, and \ref{assum:QI_H_sym}, the convergence of the policy iteration for $\Omega$ follows from \citep[Lemma 4.1]{Xu:25}, which extends Kleinman's iteration to symmetric indefinite AREs.
\end{pf}

In order to solve the AREs without \sg{the} knowledge of system dynamics, a representative agent $a_{k,1}$ is selected for each class $k$, and $\mathcal{X} = [x_{1,1}^\top, x_{2,1}^\top, \ldots, x_{\mathcal{K},1}^\top]^\top \in \mathbb{R}^{N}$ is defined as the augmented states vector, and $\mathcal{U} = [u_{1,1}^\top, u_{2,1}^\top, \ldots, u_{\mathcal{K},1}^\top]^\top \in \mathbb{R}^{M}$, the augmented control inputs vector. The value function ansatzs for each $P_k$ and another for $\Omega$ are respectively defined as follows:
\begin{subequations}
\label{eq:val_ansatz}
\begin{align}
    \Psi_{1,k}(t,x_{k,1}) &= e^{-\rho t}x_{k,1}^\top P_k^{(\ell)} x_{k,1} \label{eq:val_local} \\
    \Psi_2(t, \mathcal{X}) &= e^{-\rho t}\mathcal{X}^\top \Omega^{(\ell)} \mathcal{X}. \label{eq:val_global}
\end{align}
\end{subequations}
Under Assumption \ref{assum:A_B_stabi_A_Q_Obs}, a control input 
\begin{align}
    \mathcal{U}(t) = \alpha(t) &= [\alpha_{1,1}^\top(t),\ldots,\alpha_{\mathcal{K},1}^\top(t)]^\top \notag \\&= - \mathcal{L}_{\Omega}^{(0)}\mathcal{X}(t) + l(t)
\end{align}
can be selected, where $\alpha(t)$ denotes  the control input during the learning phase, composed of an initially stabilizing global gain $\mathcal{L}_{\Omega}^{(0)}$ such that $\mathcal{A} -\frac{1}{2}\rho I_{N} - \mathcal{B}\mathcal{L}_{\Omega}^{(0)}$ is Hurwitz, and a global exploratory noise $l(t) = [l_{1,1,1}(t),\ldots,l_{\mathcal{K},1, m_k}(t)]^\top \in \mathbb{R}^{M}$, composed of independent noise inputs for all elements of $\mathcal{U}(t)$. The associated $L_{k}^{(0)}$ included in $\mathcal{L}_\Omega^{(0)}$ are also selected to be initially stabilizing gain for each class $k$ such that $A_k -\frac{1}{2}\rho I_{n_k} - B_kL_{k}^{(0)}$ are Hurwitz. The dynamics of the representative agents and the augmented states are given by
\begin{subequations}
\label{eq:dyn_sys_explo}
    \begin{align}
    dx_{k,1}(t) &= (A_{k}x_{k,1}(t) + B_k\alpha_{k,1}(t))dt + D_kdw_{k,1}(t) \\
    d\mathcal{X}(t) &= \left(\mathcal{A}\mathcal{X}(t) + \mathcal{B}\alpha(t)\right)dt + \mathcal{D}d\mathcal{W}
\end{align}
\end{subequations}
where $$
\begin{aligned}
\mathcal{D} & = \text{diag}(D_1, \ldots, D_\mathcal{K}),&  
\mathcal{W}(t)  = [w_{1,1}^\top(t), \ldots, w_{\mathcal{K},1}^\top(t)]^\top.
\end{aligned}
$$

Applying Itô's formula to (\ref{eq:val_ansatz}), and using  \eqref{eq:Kleinman} and (\ref{eq:dyn_sys_explo}), we obtain
\allowdisplaybreaks
\begin{subequations}
\label{eq:InfinitesimalGenerators}
\begin{align}
    d\Psi_{1,k} &= 
         e^{-\rho t} \Big[ -x_{k,1}^\top (L_{P,k}^{(\ell-1)})^\top R_k L_{P,k}^{(\ell-1)} x_{k,1} \label{eq:dPsi_local}\\
        & - x_{k,1}^\top Q_k x_{k,1} + 2(\alpha_{k,1} + L_{P,k}^{(\ell-1)} x_{k,1})^\top R_k L_{P,k}^{(\ell)} x_{k,1} \notag\\
        & + \operatorname{Tr}(D_k D_k^\top P_k^{(\ell)}) \Big] dt + 2e^{-\rho t} x_{k,1}^\top P_k^{(\ell)} D_k dw_{k,1}  \notag\\[5pt] 
    d\Psi_{2} &= 
        e^{-\rho t} \Big[ -\mathcal{X}^\top (\mathcal{L}_\Omega^{(\ell-1)})^\top \mathcal{R} \mathcal{L}_\Omega^{(\ell-1)} \mathcal{X} \label{eq:dPsi_global}\\
        & - \mathcal{X}^\top \mathcal{Q}(I_N - \mathcal{H}) \mathcal{X} + 2(\alpha + \mathcal{L}_\Omega^{(\ell-1)} \mathcal{X})^\top \mathcal{R} \mathcal{L}_\Omega^{(\ell)} \mathcal{X} \notag\\
        & + \operatorname{Tr}(\mathcal{D} \mathcal{D}^\top \Omega^{(\ell)}) \Big] dt + 2e^{-\rho t} \mathcal{X}^\top \Omega^{(\ell)} \mathcal{D} d\mathcal{W}.\notag
\end{align}
\end{subequations}
\sg{Let $\Delta t$ denote the length of integration chosen for transforming the trajectory data.}
Integrating both sides for a time interval $[t, t+\Delta t]$ and taking its expectation, the expectation of the integral forms of (\ref{eq:InfinitesimalGenerators}) is \sg{given by}
\begin{subequations}
\label{eq:IRL_identities}
\begin{align}
    &\Delta\Psi_{1,k}^t  = -\mathcal{I}_{q,k}^t + 2\mathcal{I}_{1,P,k}^t + \mathcal{I}_{2, P,k}^t \label{eq:int_form_k}\\
    &\Delta\Psi_2^t = - \mathcal{I}_K^t + 2\mathcal{I}_{1,\Omega}^t + \mathcal{I}_{2, \Omega}^t \label{eq:int_form_Omega}
\end{align}
\end{subequations}
where the terms for \eqref{eq:int_form_k} are defined by
\begin{flalign*}
    &\Delta\Psi_{1,k}^t = \mathbb{E}\big[e^{-\rho (t+\Delta t)}x_{k,1}^\top(t+\Delta t)P_k^{(\ell)} x_{k,1}(t+\Delta t)\\ &\qquad - e^{-\rho t}x_{k,1}^\top(t)P_k^{(\ell)} x_{k,1}(t)\big] &\\
    &\mathcal{I}_{q,k}^t = \mathbb{E}\int_t^{t+\Delta t}e^{-\rho \tau}\Big[x_{k,1}^\top\big((L_{P,k}^{(\ell-1)})^\top R_k L_{P,k}^{(\ell-1)} + Q_k\big)x_{k,1}\Big]d\tau &\\
    &\mathcal{I}_{1, P,k}^t = \mathbb{E}\int_t^{t+\Delta t}e^{-\rho \tau}(\alpha_{k,1} + L_{P,k}^{(\ell-1)}x_{k,1})^\top R_k L_{P,k}^{(\ell)}x_{k,1} d\tau &\\
    &\mathcal{I}_{2,P,k}^t = \tfrac{1}{\rho}(e^{-\rho t} - e^{-\rho (t+\Delta t)}) \operatorname{Tr}(D_k D_k^\top P_k^{(\ell)}), &
\end{flalign*}
and the terms for \eqref{eq:int_form_Omega} are defined by 
\begin{flalign*}
    &\Delta\Psi_2^t = \mathbb{E}\big[e^{-\rho (t+\Delta t)}\mathcal{X}^\top(t+\Delta t)\Omega^{(\ell)}\mathcal{X}(t+\Delta t) \\& \qquad \qquad \quad  - e^{-\rho t}\mathcal{X}^\top(t)\Omega^{(\ell)}\mathcal{X}(t)\big] &\\
    &\mathcal{I}_K^t = \mathbb{E}\int_t^{t+\Delta t}e^{-\rho \tau}\Big[\mathcal{X}^\top\big((\mathcal{L}_{\Omega}^{(\ell-1)})^\top\mathcal{R}\mathcal{L}_{\Omega}^{(\ell-1)} \\
    &\qquad\qquad\qquad\quad + \mathcal{Q}(I_{N}-\mathcal{H})\big)\mathcal{X}\Big]d\tau &\\
    &\mathcal{I}_{1, \Omega}^t = \mathbb{E}\int_t^{t+\Delta t}e^{-\rho \tau}(\alpha + \mathcal{L}_\Omega^{(\ell-1)}\mathcal{X})^\top \mathcal{R}\mathcal{L}_\Omega^{(\ell)}\mathcal{X} d\tau &\\
    &\mathcal{I}_{2,\Omega}^t = \tfrac{1}{\rho}(e^{-\rho t} - e^{-\rho (t+\Delta t)}) \operatorname{Tr}(\mathcal{D} \mathcal{D}^\top \Omega^{(\ell)}). &
\end{flalign*}
Equations (\ref{eq:IRL_identities}) can be expressed using the Kronecker product representation, which yields
\allowdisplaybreaks
\begin{subequations}
\label{eq:IRL_Identities_kron}
\begin{align}
    0 &= {(\delta_{P,k}^t})^\top \hat{P}_{k}^{(\ell)} + \delta_\rho^t \theta_{P,k}^{(\ell)}  \label{eq:Kron_Pk}\\
    &\quad + ({\mathcal{I}_{xx,k}^t})^\top \operatorname{vec}((L_{P,k}^{(\ell-1)})^\top R_k L_{P,k}^{(\ell-1)} + Q_k) \notag \\
    &\quad - 2\Big[({\mathcal{I}_{x\alpha,k}^t})^\top(I_{n_k} \otimes R_k) \notag \\
    &\quad + {(\mathcal{I}_{xx,k}^t})^\top(I_{n_k} \otimes (L_{P,k}^{(\ell-1)})^\top R_k)\Big]\operatorname{vec}(K_{P,k}^{(\ell)}) \notag\\[5pt]
    0 &= ({\delta_{\Omega,k}^t})^\top \hat{\Omega}^{(\ell)} + \delta_\rho^t \theta_{\Omega}^{(\ell)} \label{eq:Kron_Omega}\\
    &\quad + ({\mathcal{I}_{XX}^t})^\top \operatorname{vec}((\mathcal{L}_{\Omega}^{(\ell-1)})^\top \mathcal{R}\mathcal{L}_{\Omega}^{(\ell-1)} + \mathcal{Q}(I_{N}-\mathcal{H})) \notag \\
    &\quad - 2\Big[({\mathcal{I}_{X\alpha}^t})^\top(I_{N} \otimes \mathcal{R}) \notag\\
    &\quad + {(\mathcal{I}_{XX}^t})^\top(I_{N} \otimes (\mathcal{L}_{\Omega}^{(\ell-1)})^\top \mathcal{R})\Big]\operatorname{vec}(\mathcal{L}_\Omega^{(\ell)}) \notag
\end{align}
\end{subequations}
where the terms for the equation for \eqref{eq:Kron_Pk} are 
\begin{flalign*}
    &\begin{aligned}
        \hat{P}_k^{(\ell)} &= \big[P_{k,11}^{(\ell)}, 2P_{k,12}^{(\ell)}, \ldots, 2P^{(\ell)}_{k,1n_k}, \\
        &\quad P_{k,22}^{(\ell)}, \ldots, P_{k,n_kn_k}^{(\ell)}\big]^\top \in \mathbb{R}^{\frac{n_k(n_k+1)}{2}} 
    \end{aligned} &\\
    &\begin{aligned}
        \hat{x}_{k,1} &= \big[x_1^2, 2x_1x_2, \ldots, 2x_1x_{n_k}, \\
        &\quad x_2^2, \ldots, x_{n_k}^2\big]^\top \in \mathbb{R}^{\frac{n_k(n_k+1)}{2}}, \text{ where } x_j = x_{k,1,j}
    \end{aligned} &\\
    &\delta^t_{P,k} = \mathbb{E}\big[e^{-\rho (t+\Delta t)}\hat{x}_{k,1}(t+\Delta t) - e^{-\rho t}\hat{x}_{k,1}(t)\big] \in \mathbb{R}^{\frac{n_k(n_k+1)}{2}} &\\
    &\mathcal{I}_{xx,k}^t = \mathbb{E}\left[\int_t^{t+\Delta t}e^{-\rho \tau} (x_{k,1} \otimes x_{k,1}) d\tau\right] \in \mathbb{R}^{n_k^2} &\\
    &\mathcal{I}_{x\alpha, k}^t = \mathbb{E}\left[\int_t^{t+\Delta t}e^{-\rho \tau} (x_{k,1} \otimes \alpha_{k,1}) d\tau\right] \in \mathbb{R}^{n_km_k} &\\
    &\delta^t_\rho = e^{-\rho (t+\Delta t)} - e^{-\rho t} \in \mathbb{R} &\\
    &\theta_{P,k}^{(\ell)} = \frac{1}{\rho}\operatorname{Tr}(D_k D_k^\top P_k^{(\ell)}) \in \mathbb{R} &
\end{flalign*}
and the terms for \eqref{eq:Kron_Omega} are
\begin{flalign*}
    &\begin{aligned}
        \hat{\Omega}^{(\ell)} &= \big[\Omega_{11}^{(\ell)}, 2\Omega_{12}^{(\ell)}, \ldots, 2\Omega_{1N}^{(\ell)}, \\
        &\quad \Omega_{22}^{(\ell)}, \ldots, \Omega_{NN}^{(\ell)}\big]^\top \in \mathbb{R}^{\frac{N(N+ 1)}{2}} 
    \end{aligned} &\\
    &\begin{aligned}
        \hat{\mathcal{X}} &= \big[\mathcal{X}_1^2, 2\mathcal{X}_1\mathcal{X}_2, \ldots, 2\mathcal{X}_1\mathcal{X}_{N}, \\
        &\quad \mathcal{X}_2^2, \ldots, \mathcal{X}_{N}^2\big]^\top \in \mathbb{R}^{\frac{N(N+1)}{2}}
    \end{aligned} &\\
    &\delta^t_{\Omega} = \mathbb{E}\big[e^{-\rho (t+\Delta t)}\hat{\mathcal{X}}(t+\Delta t) - e^{-\rho t}\hat{\mathcal{X}}(t)\big] \in \mathbb{R}^{\frac{N(N+1)}{2}} &\\
    &\mathcal{I}_{XX}^t = \mathbb{E}\left[\int_t^{t+\Delta t}e^{-\rho \tau} (\mathcal{X} \otimes \mathcal{X}) d\tau\right] \in \mathbb{R}^{N^2} &\\
    &\mathcal{I}_{X\alpha}^t = \mathbb{E}\left[\int_t^{t+\Delta t}e^{-\rho \tau} (\mathcal{X} \otimes \alpha) d\tau\right] \in \mathbb{R}^{NM} &\\
    &\delta_\rho^t = e^{-\rho (t+\Delta t)} - e^{-\rho t} \in \mathbb{R} &\\
    &\theta_{\Omega, k}^{(\ell)} = \tfrac{1}{\rho}\operatorname{Tr}(\mathcal{D}\mathcal{D}^\top\Omega^{(\ell)}) \in \mathbb{R}.
\end{flalign*}
Using $l \in \mathbb{N}$ time steps associated with real-time data, the following matrices are defined
\begin{flalign*}
    &\Delta_{1k} = [\delta_{P,k}^{t_1}, \ldots, \delta_{P,k}^{t_l}]^\top \in \mathbb{R}^{l \times \frac{n_k(n_k+1)}{2}} &\\
    &\Delta_1 = [\delta_{\Omega}^{t_1}, \ldots, \delta_{\Omega}^{t_l}]^\top \in \mathbb{R}^{l \times \frac{N(N+1)}{2}} &\\
    &\begin{aligned}
        \Delta_{2k} &= -2[\mathcal{I}_{xx, k}^{t_1}, \ldots, \mathcal{I}_{xx,k}^{t_l}]^\top(I_{n_k} \otimes (L_{P,k}^{(k-1)})^\top R_k) \\
        &\quad -2[\mathcal{I}_{x\alpha, k}^{t_1}, \ldots, \mathcal{I}_{x\alpha, k}^{t_l}]^\top(I_{n_k} \otimes R_k) \in \mathbb{R}^{l \times m_kn_k}
    \end{aligned} &\\
    &\begin{aligned}
        \Delta_3 &= -2[\mathcal{I}_{XX}^{t_1}, \ldots, \mathcal{I}_{XX}^{t_l}]^\top(I_{N} \otimes (\mathcal{L}_{\Omega}^{(\ell-1)})^\top \mathcal{R}) \\
        &\quad -2[\mathcal{I}_{X\alpha}^{t_1}, \ldots, \mathcal{I}_{X\alpha}^{t_l}]^\top(I_{N} \otimes \mathcal{R}) \in \mathbb{R}^{l \times NM}
    \end{aligned} &\\
    &\begin{aligned}
        \Delta_{4k} &= -[\mathcal{I}_{xx,k}^{t_1}, \ldots, \mathcal{I}_{xx,k}^{t_l}]^\top \\
        &\quad \cdot \operatorname{vec}((L_{P,k}^{(\ell-1)})^\top R_k L_{P,k}^{(\ell-1)} + Q_k) \in \mathbb{R}^l
    \end{aligned} &\\
    &\begin{aligned}
        \Delta_5 &= -[\mathcal{I}_{XX}^{t_1}, \ldots, \mathcal{I}_{XX}^{t_l}]^\top \\
        &\quad \cdot \operatorname{vec}((\mathcal{L}_{\Omega}^{(\ell-1)})^\top \mathcal{R} \mathcal{L}_{\Omega}^{(\ell-1)} + \mathcal{Q}(I_{N}-\mathcal{H})) \in \mathbb{R}^l
    \end{aligned} &\\
    &\Delta_6 = [\delta_\rho^{t_1}, \ldots, \delta_\rho^{t_l}]^\top \in \mathbb{R}^l &
\end{flalign*}
and equations (\ref{eq:IRL_Identities_kron}), using the data from $l$ time steps, can be expressed as
\begin{subequations} \label{eq:time_step_data}
\begin{align}
0 &= \Delta_{1k}\hat{P}_k^{(\ell)} + \Delta_{2k}\text{vec}(L_{P,k}^{(\ell)}) + \Delta_6\theta^{(\ell)}_{P,k} - \Delta_{4k} \\
0 &= \Delta_{1}\hat{\Omega}_k^{(\ell)} + \Delta_3\text{vec}(\mathcal{L}_{\Omega}^{(\ell)}) + \Delta_6\theta_{\Omega}^{(\ell)} - \Delta_5.
\end{align}
\end{subequations}
{Equation} \eqref{eq:time_step_data} can then be reformulated as the matrix form
\begin{subequations}
\label{eq:LeastSquaresSystem}
\begin{align}
    \underbrace{\begin{bmatrix} \Delta_{1k} & \Delta_{2k} & \Delta_6 \end{bmatrix}}_{\Xi_{1,k}} 
    \begin{bmatrix} \hat{P}^{(\ell)}_k \\ \operatorname{vec}(L_{P,k}^{(\ell)}) \\ \theta_{P,k}^{(\ell)} \end{bmatrix} &= \Delta_{4k} \label{eq:LS_local} \\[6pt]
    \underbrace{\begin{bmatrix} \Delta_{1} & \Delta_{3} & \Delta_6 \end{bmatrix}}_{\Xi_{2}} 
    \begin{bmatrix} \hat{\Omega}^{(\ell)} \\ \operatorname{vec}(\mathcal{L}_{\Omega}^{(\ell)}) \\ \theta_{\Omega}^{(\ell)} \end{bmatrix} &= \Delta_{5}. \label{eq:LS_global}
\end{align}
\end{subequations}

We introduce the following assumption regarding the requirement for the trajectory data.
\begin{assumption}\label{assum:persExcit}
There exists an integer $L > 0$ such that for $l \geq L$, the matrices 
\begin{align*}
    \begin{bmatrix}
        \mathcal{I}_{xx,k}^{t_1} & \mathcal{I}_{xx,k}^{t_2} & \ldots & \mathcal{I}_{xx,k}^{t_l}\\
        \mathcal{I}_{x\alpha,k}^{t_1} & \mathcal{I}_{x\alpha,k}^{t_2} & \ldots & \mathcal{I}_{x\alpha,k}^{t_l}\\
        \delta_\rho^{t_1} & \delta_\rho^{t_2} & \ldots & \delta_\rho^{t_l}
    \end{bmatrix}
\end{align*}
are of rank $\frac{n_k(n_k+1)}{2} + m_kn_k + 1$ for all class $k$ and the matrix
\begin{align*}
    \begin{bmatrix}
        \mathcal{I}_{XX}^{t_1} & \mathcal{I}_{XX}^{t_2} & \ldots & \mathcal{I}_{XX}^{t_l}\\
        \mathcal{I}_{X\alpha,k}^{t_1} & \mathcal{I}_{X\alpha,k}^{t_2} & \ldots & \mathcal{I}_{X\alpha,k}^{t_l}\\
        \delta_\rho^{t_1} & \delta_\rho^{t_2} & \ldots & \delta_\rho^{t_l}
    \end{bmatrix}
\end{align*}
is of rank $\frac{N(N+1)}{2} + NM + 1$.
\end{assumption}
\begin{remark}
Assumption \ref{assum:persExcit} ensures that \eqref{eq:LeastSquaresSystem} have unique solutions. In practice, it can be satisfied by injecting exploration noise into the control input during the learning phase, such as Gaussian noise, or a sum of sinusoids, as shown in \citep{Jiang:2012} and \citep{Xu:23}.
\end{remark}

The multi-class LQG-MFG integral reinforcement learning can thus be formulated in Algorithm~\ref{alg:MFG_IRL}.
\begin{algorithm}[ht]
\caption{Multi-class Mean Field Games Integral Reinforcement Learning}
\label{alg:MFG_IRL}
\begin{algorithmic}
\Statex \textbf{Initialization:} 
\State Choose $\mathcal{L}_{\Omega}^{(0)}$ s.t. $\mathcal{A} - \frac{1}{2}\rho I_{N} -\mathcal{B}\mathcal{L}_{\Omega}^{(0)}$ is Hurwitz and s.t. the associated class-specific $L_{P,k}^{(0)}$ ensure $A_k -\frac{1}{2}\rho I_{n_k}- B_k L_{P,k}^{(0)}$ are Hurwitz for all $k$. 
\State Select representative agents $a_{k,1}$ for all class $k$, set threshold $\varepsilon$, and set iteration counter $\ell=1$.
\Statex \vspace{5pt} \textbf{Data Collection:}
\State Apply $\mathcal{U}(t) = - \mathcal{L}_{\Omega}^{(0)}\mathcal{X}(t)+ l(t)$ to global system composed of selected agents.
\State Compute data vectors $\delta_{P,k}, \mathcal{I}_{xx,k}, \mathcal{I}_{x\alpha,k}, \mathcal{I}_{XX}, \mathcal{I}_{X\alpha}, \delta_t$ for $t_j$, $j \in \{1,\ldots,l \}$ until $\text{rank}(\Xi_{1,k})$ and $\text{rank}(\Xi_{2})$ satisfy persistence of excitation. (Assumption \ref{assum:persExcit})

\Statex \vspace{5pt} \textbf{Policy Iteration:}
\Repeat
    \State Solve for class-specific parameters:
    \begin{equation} \label{eq:leastsquaresClass}
        \begin{bmatrix} \hat{P}_k^{(\ell)} \\ \text{vec}(L_{P,k}^{(\ell)}) \\ \theta_{P,k}^{(\ell)} \end{bmatrix} = (\Xi_{1,k}^\top \Xi_{1,k})^{-1}\Xi_{1,k}^\top \Delta_{4k}, \quad \text{for all } k.
    \end{equation}
    \State Solve for global parameters:
    \begin{equation} \label{eq:leastsquaresGlobal}
        \begin{bmatrix} \hat{\Omega}^{(\ell)} \\ \text{vec}(\mathcal{L}_{\Omega}^{(\ell)}) \\ \theta_{\Omega}^{(\ell)} \end{bmatrix} = (\Xi_{2}^\top \Xi_2)^{-1}\Xi_2^\top \Delta_5.
    \end{equation}
    \State Update $\ell \leftarrow \ell+1$
\Until{$\|P_{k}^{(\ell)} - P_k^{(\ell-1)}\| \leq \varepsilon$ \text{for all $k$ }\textbf{\&} $\|\Omega^{(\ell)} - \Omega^{(\ell-1)}\| \leq \varepsilon$}
\end{algorithmic}
\end{algorithm}

\begin{remark}[Trajectory Data Required by the Algorithm]
  
   The data-driven algorithm  requires multiple trajectories of the states and control inputs of the representative agent $a_{k,1}$ of class $k$ for all $k\in\{1,\cdots , \mathcal{K}\}$. 
   These trajectories are defined for the interval $[t_1, t_l+\Delta t]$ and satisfy Assumption~\ref{assum:persExcit}.
   Using a large finite number of trajectories, the mean of data integrals in $\delta_{P,k}, \mathcal{I}_{xx,k}, \mathcal{I}_{x\alpha,k}, \delta_\Omega, \mathcal{I}_{XX}$ and $\mathcal{I}_{x\alpha}$ can approximate their expectations.
\end{remark}

\section{Numerical Example}
A numerical simulation for learning multi-class LQG-MFG gain matrices using   Algorithm \ref{alg:MFG_IRL} is carried out. The parameters of the problems  are given in  Table~\ref{tab:system_params}.
\begin{table}[h]
\centering
\caption{Class-Specific  Parameters}
\label{tab:system_params}
\setlength{\tabcolsep}{8pt} 
\begin{tabular}{lccc}
\toprule
Parameter & Class 1 & Class 2 & Class 3 \\ \midrule
$A_k$ & $\begin{bmatrix} 0 & 10 \\ -10 & -3 \end{bmatrix}$ & $\begin{bmatrix} 0 & 1 & 0 \\ 0 & 0 & 1 \\ -2 & -3 & -5 \end{bmatrix}$ & $\begin{bmatrix} 0 & -4 \\ 3 & -6 \end{bmatrix}$ \\ \addlinespace[10pt]
$B_k$ & $\begin{bmatrix} 1.0 \\ 1.0 \end{bmatrix}$ & $\begin{bmatrix} 0 & 0 \\ 0 & 1.0 \\ 1.0 & 0.5 \end{bmatrix}$ & $\begin{bmatrix} 0.8 \\ 3.0 \end{bmatrix}$ \\ \addlinespace[10pt]
$D_k$ & $0.1 I_{2}$ & $0.1 I_{3}$ & $0.1 I_{2}$ \\ \addlinespace[10pt]
$Q_k$ & $\text{diag}(20, 10)$ & $\text{diag}(10, 15, 20)$ & $\text{diag}(30, 20)$ \\ \addlinespace[10pt]
$R_k$ & $0.8$ & $\text{diag}(0.5, 0.7)$ & $0.6$ \\ \addlinespace[5pt]
$\rho_k$ & $0.1$ & $0.1$ & $0.1$ \\ \bottomrule
\end{tabular}
\end{table}

The global interaction matrix $\mathcal{H}$ is chosen to ensure Assumptions \ref{assum:Hamiltonian_Q} and \ref{assum:QI_H_sym} are respected. By eigen decomposition of matrix $\mathcal{Q} = U\Lambda U^\top$, where $U$ is the matrix where the columns are the eigenvectors of $\mathcal{Q}$, and $\Lambda$ is a diagonal matrix where the diagonal elements are the corresponding eigenvalues, then
\begin{equation}
    \mathcal{H} = (U\Lambda^{-\frac{1}{2}}U^\top)\tilde{\mathcal{H}}(U\Lambda^{\frac{1}{2}}U^\top).
\end{equation}
The normalized interaction matrix $\tilde{\mathcal{H}}$ is constructed as a block matrix
\begin{equation}
    \tilde{\mathcal{H}} = \frac{1}{\lambda_{\text{max}}}\begin{bmatrix} 
    H_{11} & H_{12} & H_{13} \\
    H_{21} & H_{22} & H_{23} \\
    H_{31} & H_{32} & H_{33} 
    \end{bmatrix}=\frac{1}{\lambda_{\text{max}}}\tilde{H}
\end{equation}
where $\lambda_{\text{max}}$ is the largest eigenvalue of $\tilde{H}$. Diagonal blocks are defined such that $H_{km} = \mathbf{0}_{n_k\times n_m}$ if $k=m$, indicating no self-coupling within each class. The remaining blocks, representing inter-class coupling between the first two states of each class, are defined as 
\begin{align*}
    H_{12} &= \frac{1}{2}\begin{bmatrix}
        I_{2} &\mathbf{0}_{2\times 1}
    \end{bmatrix},\; H_{21}=H_{12}^\top,\\
    H_{32}&= \frac{1}{2}\begin{bmatrix}
        I_{2} &\mathbf{0}_{2\times 1}
    \end{bmatrix},\; H_{23}=H_{32}^\top,\\
    H_{13} &= H_{31} = \frac{1}{2}I_2.
\end{align*}
Under Assumptions \ref{assum:A_B_stabi_A_Q_Obs}-
\ref{assum:QI_H_sym}, the AREs \eqref{eq:class_spec_ARE} and \eqref{eq:ARE_Omega} \sg{admit} unique positive-definite solutions $P_k$ and $\Omega$\sg{, respectively}.

To run the model-free Algorithm \ref{alg:MFG_IRL}, a representative agent $a_{k,1}$ is selected for all $k$ classes. The initial states and initial gains are the same for each class
\begin{align*}
    L_k^{(0)} = \mathbf{0}_{m_k\times n_k} \qquad x_{k,1}(0) = \mathbf{1}_{n_k\times 1}
\end{align*}
and $\mathcal{L}_\Omega^{(0)} = \text{diag}(L_1^{(0)}, \ldots, L_k^{(0)})$.

Using an exploration noise $\mathcal{U}(t) = \alpha(t) = -\mathcal{L}_{\Omega}^{(0)}\mathcal{X}(t)+ l(t)$, where each channel $l_{k,1,i}(t)$ for each class $k\in \{1,\ldots,\mathcal{K}\}$ and each control input channel $i\in\{1,\ldots,m_k\}$ is composed by a sum of sinusoids, such that
$
    l_{k,1,i}(t) = \sum_{j=1}^{500}A_e\text{sin}(\omega_jt)\;
$
where $\omega_j$ is selected uniformly randomly in $[-100, 100]$ and independently across agents and channels, and $A_e = 25$. The global system composed of agents $a_{k,1}$ runs 100 times for 20 s. Algorithm 1 is then carried out, with $\varepsilon = 10^{-9}$. The results are presented in Tables \ref{tab:class_results} and \ref{tab:stacked_results_full_width}. The convergence of the matrices ($L_{P,k}, P_k$) and ($\mathcal{L}_{\Omega}, \Omega$) to their ground truth values is plotted in Fig. \ref{fig:convergence_combined}.

\begin{table}[h]
\centering
\setlength{\tabcolsep}{-2pt}
\caption{Comparison of Learned Parameters vs. Ground Truth for Individual Classes ($L_{P,k}, P_k$)}
\label{tab:class_results}
\begin{tabular}{lcc}
\toprule
Parameter & Learned (IRL) & Ground Truth \\ \midrule
$L_{P,1}$ & $[3.9969, 2.7750]$ & $[3.9608, 2.7376]$  \\
$P_1$ & $\begin{bmatrix} 2.8624 & 0.3584 \\ 0.3584 & 1.8810 \end{bmatrix}$ & $\begin{bmatrix} 2.8102 & 0.3584 \\ 0.3584 & 1.8316 \end{bmatrix}$ \\ \midrule
$L_{P,2}$ & $\begin{bmatrix} -0.4359 & -0.6928 & 2.9514 \\ 3.6698 & 5.5736 & 0.5743 \end{bmatrix}$ & $\begin{bmatrix} -0.4233 & -0.6794 & 2.9314 \\ 3.6677 & 5.5738 & 0.5616 \end{bmatrix}$ \\
$P_2$ & $\begin{bmatrix} 13.4680 & 2.6812 & -0.2104 \\ 2.6812 & 4.0818 & -0.3260 \\ -0.2104 & -0.3260 & 1.5024 \end{bmatrix}$ & $\begin{bmatrix} 13.4070 & 2.6732 & -0.2117 \\ 2.6732 & 4.0715 & -0.3397 \\ -0.2117 & -0.3397 & 1.4657 \end{bmatrix}$ \\ \midrule
$L_{P,3}$ & $[-3.6928, 6.0612]$ & $[-3.6908, 6.0446]$ \\
$P_3$ & $\begin{bmatrix} 10.2930 & -3.4970 \\ -3.4970 & 2.1593 \end{bmatrix}$ & $\begin{bmatrix} 10.2340 & -3.4672 \\ -3.4672 & 2.1335 \end{bmatrix}$ \\ \bottomrule
\end{tabular}
\end{table}

\begin{table}[h] 
\centering
\caption{Comparison of Selected Parameters for $(\mathcal{L}_{\Omega}, \Omega)$}
\label{tab:stacked_results_full_width}
\begin{tabular*}{\columnwidth}{@{\extracolsep{\fill}}lccc@{}}
\toprule
Parameter & Learned (IRL) & Ground Truth & Error \\ \midrule
$\Omega_{1,1}$           & 2.7679 & 2.7111 & 0.0568 \\
$2\Omega_{1,2}$         & 0.6943 & 0.7514 & 0.0571 \\
$2\Omega_{1,3}$         & -1.2242 & -1.2599 & 0.0357 \\
$\Omega_{6,6}$           & 9.6314 & 9.5677 & 0.0637 \\ \midrule
$\mathcal{L}_{\Omega,1,1}$ & 3.9079 & 3.8585 & 0.0494 \\
$\mathcal{L}_{\Omega,1,2}$ & 2.7259 & 2.7032 & 0.0227 \\
$\mathcal{L}_{\Omega,2,3}$ & -0.5188 & -0.4515 & 0.0673 \\
$\mathcal{L}_{\Omega,3,6}$ & -1.3534 & -1.3572 & 0.0038 \\ \bottomrule
\end{tabular*}
\end{table}

\begin{figure}[htbp]
    \centering
    \includegraphics[width=\columnwidth]{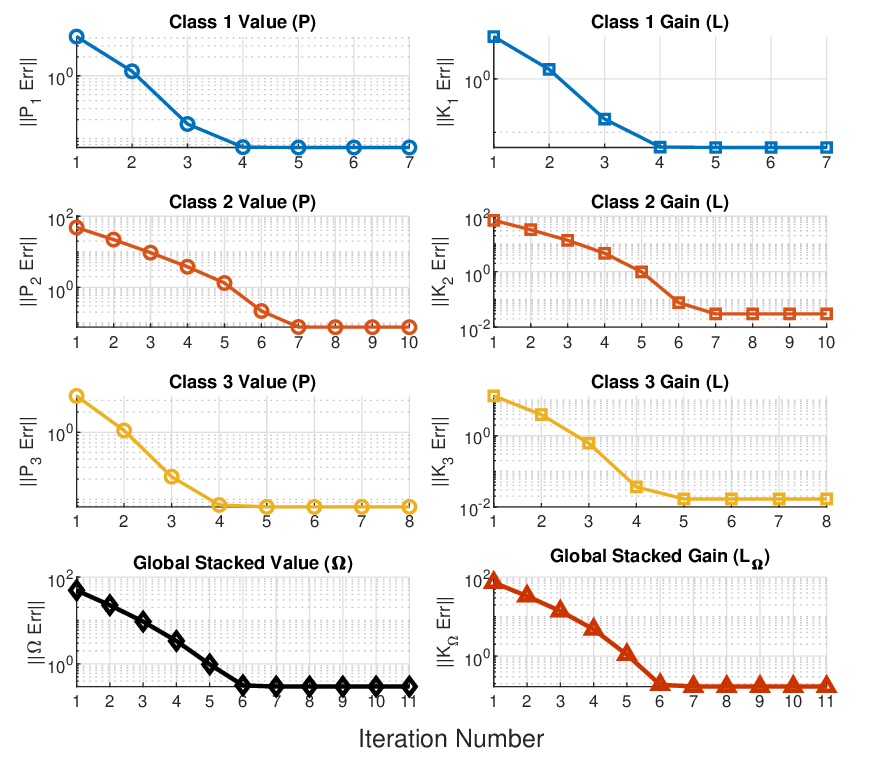}
    \caption{Convergence analysis of the IRL algorithm for local class-level results and global system-level results}
    \label{fig:convergence_combined}
\end{figure}

By iteration 11, all matrices have converged to the fixed threshold $\varepsilon$, and the Frobenius norm error of the learned matrices to their ground truth obtained using MATLAB's \texttt{care()} function is small. Individual elements, a sample of which is presented in Tables \ref{tab:class_results} and \ref{tab:stacked_results_full_width}, are also close to their ground truth value.

Similar to the methodology of \cite{Xu:23}, the mean field trajectories are computed offline using the learned control gains. A representative augmented state $\mathcal{X} = [x_{1,1}^\top, x_{2,1}^\top, x_{3,1}^\top]^\top \in \mathbb{R}^{N}$ is simulated $N_s=100$ times using the learned global feedback law $\alpha = -\mathcal{L}_\Omega \mathcal{X}$ with initial states $\mathcal{X}(0) = \mathbf{1}_{7\times 1}$. The empirical mean field is then constructed as 
$\mathcal{X}^{(N_s)}(t) = \frac{1}{N_s}\sum_{j=1}^{N_s}\mathcal{X}_j(t)$,
where $\mathcal{X}_j(t)$ is the trajectory from the $j$-th run. By the law of large numbers $\mathcal{X}^{(N_s)}(t) \rightarrow \bar{\mathcal{X}}(t)$ as $N_s \rightarrow \infty$. The class-specific mean fields $\bar{x}_k(t)$ are approximated as the corresponding subvectors of $\mathcal{X}^{(N_s)}(t)$ and the global mean field corresponds to their average. 

For validation, a finite population of 50 agents per class is simulated, with all initial states located uniformly in [0.5, 1.5]. The agents are simulated using control input
\begin{equation}
        u_{k,i}(t) = -L_{P,k}x_{k,i}(t) - L_{\Pi, k}\mathcal{X}^{(50)}(t),
\end{equation}
where $L_{\Pi,k} \in \mathcal{R}^{m_k\times N}$ is the $k$-th block row of the matrix $\mathcal{L}_{\Pi} = \mathcal{L}_{\Omega} -\mathcal{L}_{\mathcal{P}} \in \mathbb{R}^{M\times N}$, with $\mathcal{L}_{\mathcal{P}} = \text{diag}(L_{P,1}, \ldots, L_{P,\mathcal{K}}) \in \mathbb{R}^{M \times N}$. 

The results 
comparing trajectories using learned gains to those using gains computed by \texttt{care()} serving as ground truth, are presented in Fig. \ref{fig:mftrajectories}. Class-specific plots show the resulting empirical class mean field $x_k^{(N_s)}$, and the shaded area shows the spread of the agents around the mean ($\pm 2$ standard deviations).

\begin{figure}[h]
    \centering
    \includegraphics[width=\linewidth]{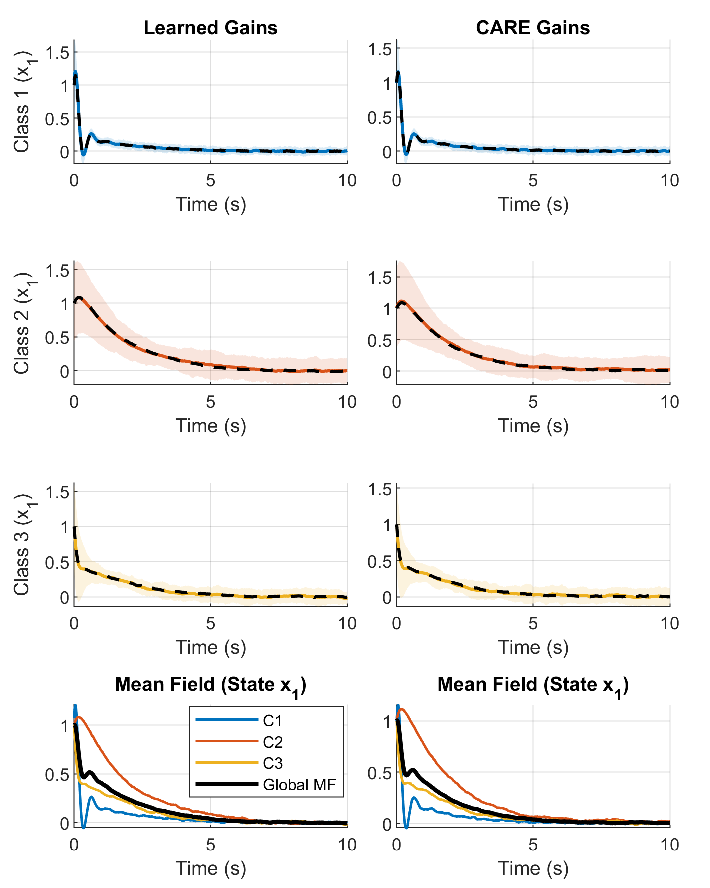}
    \caption{Mean field trajectories under data-driven learned controls (left) and standard \texttt{care()} gains computed assuming known system and cost matrices (right)}
    \label{fig:mftrajectories}
\end{figure}

\section{Conclusion}

This paper established a data-driven algorithm for computing  strategies for continuous-time infinite horizon LQG MFGs with heterogeneous network-coupled populations that contain completely unknown dynamics. During the data collection and learning phase, a global system is formed for a generic agent in each class, and an exploration noise is applied to an initially stabilizing control to ensure persistency of excitation. Under conditions  on the persistency of excitation and  on the existence of unique stabilizing solution for the corresponding AREs,  the algorithm converges to the MFG strategies that depend on the classes and  network couplings.

Future investigations should extend the solutions 
to the cases with network-coupled dynamics,  directed network couplings,  
finite time horizons, and  nonlinear agent dynamics, and apply the algorithm in applications including renewable energy systems with user populations.

\bibliography{ifacconf}             

\appendix
\section{Proof of proposition \ref{prop:key_equations}}
Let the population size in all classes go to infinity. Then, in the limit problem, an individual agent is negligible in the mean field of every class. Thus,  the mean fields of all classes can be treated as deterministic trajectories in the limit problem.
For an agent $\alpha_{k,i}$, consider the augmented state $z_{k,i} = [x_{k,i}^\top, \bar{\mathcal{X}}^\top]^\top$ where $\bar{\mathcal{X}} = [\bar{x}_{1}^\top, \ldots,\bar{x}_{k}^\top]^\top$. Then the dynamics of the augmented states $z_{k,i}$ satisfy 
\begin{align}
dz_{k,i}&=\left(\tilde{A}_k z_{k,i} + \tilde{B}_k u_{k,i}\right)dt + \tilde{D}_k dw_{k,i}
\end{align}
where 
\[
\tilde{A}_k = \begin{bmatrix}
        A_k &\mathbf{0}_{n_k\times N}\\
        \mathbf{0}_{N\times n_k} & \bar{G}
    \end{bmatrix}, \tilde{B}_k = \begin{bmatrix}
        B_k \\ \mathbf{0}_{N\times m_k}
    \end{bmatrix},  \tilde{D}_k =  \begin{bmatrix}
        D_k \\ \mathbf{0}_{N\times d_k}
    \end{bmatrix},
\]
$\bar{G}$ is the drift of the mean field, assumed to be known. The linear mean-field dynamics with the drift $\bar{G}$ will be identified later from the consistency condition required by the  fixed-point approach for MFGs \citep{Huang:2007}.

The problem can then be framed as a standard LQR problem with cost functional 
\begin{equation}
J_k = \mathbb{E} \int_0^\infty e^{-\rho t} \left( \tilde{x}_{k,i}^\top Q_k \tilde{x}_{k,i} + u_{k,i}^\top R_k u_{k,i} \right) dt.
\end{equation}
where $\tilde{x}_{k,i} = x_{k,i} - \sum_{m=1}^{\mathcal{K}}H_{km}\Bar{x}_m$ and  with a slight abuse of notation $\bar{x}_m = \lim_{c_m \to \infty} \frac{1}{c_m}\sum_{i=1}^{c_m} x_{m,i}$. 
Then, assuming a standard value ansatz for an LQR problem, the class-specific value ansatz is 
\begin{align*}
    V_k (z_{k,i}) &= \begin{bmatrix}
        x_{k,i}\\\bar{\mathcal{X}}
    \end{bmatrix}^\top \begin{bmatrix}
        P_{11,k} & P_{12,k} \\ P_{21,k} & P_{22,k}
    \end{bmatrix} \begin{bmatrix}
        x_{k,i} \\\ \bar{\mathcal{X}}
    \end{bmatrix} \triangleq z_{k,i}^\top \tilde{P}_k z_{k,i}
\end{align*}
where $P_{11,k} \in \mathbb{R}^{n_k\times n_k}, P_{12,k} = P_{21,k}^\top \in \mathbb{R}^{n_k\times N},$ and $P_{22,k} \in \mathbb{R}^{N\times N}$. The infinite-time Hamilton-Jacobi-Bellman (HJB) equation is thus
\begin{equation*}
\begin{split}
    \rho V_k(z_{k,i})
    =  & \inf_{u_{k,i}} \{L(\tilde{x}_{k,i}, u_{k,i}) + \nabla_{z_{k,i}}V_k(z_{k,i})^\top f(z_{k,i}, u_{k,i})\}\\ &\qquad+ \text{Tr}(\frac{1}{2}\tilde{D}_k\tilde{D}_k^\top V_k(z_{k,i}))
    \end{split}
\end{equation*}
where 
\begin{align*}
   L(\tilde{x}_{k,i}, u_{k,i})  &= \tilde{x}_{k,i}^\top Q_k \tilde{x}_{k,i} + u_{k,i}^\top R_k u_{k,i}\\
    f(z_{k,i}, u_{k,i})  & = \tilde{A}_k z_{k,i} + \tilde{B}_k u_{k,i}.
\end{align*}
Expanding and replacing the terms in the HJB, it becomes 
\begin{equation*}
    \begin{split}
       \rho z_{k,i}^\top \tilde{P}_{k} z_{k,i}
         =  &  \inf_{u(\cdot)}\{z_{k,i}^\top\tilde{Q}_kz_{k,i} + u_{k,i}R_k u_{k,i} + 2z_{k,i}^\top\tilde{P}_k\tilde{A}_k z_{k,i}\\& \quad+ 2z_{k,i}^\top \tilde{P}_k\tilde{B}_k u_{k,i}\} + \text{Tr}(\frac{1}{2}\tilde{D}_k\tilde{D}_k^\top V_k(z_{k,i}))
    \end{split}
\end{equation*}
where 
\begin{align*}
    \tilde{Q}_k &= \begin{bmatrix}
    Q_k & -Q_kH_k\\-H_k^\top Q_k & H_k^\top Q_k H_k
    \end{bmatrix}\\
    H_k &= [H_{k1},\ldots,H_{k\mathcal{K}}] \in \mathbb{R}^{n_k \times N}.
\end{align*}
Setting the gradient with respect to $u$ to zero yields the optimal control law (i.e. the best response for the limit MFG problem)
\begin{align}
    u_{k,i}^* &= -R_k^{-1}\tilde{B}_k^\top \tilde{P}_k z_{k,i}\\
    &= -R_k^{-1}\tilde{B}_k^\top(P_{11,k}x_{k,i} + P_{12,k}\bar{\mathcal{X}}).
\end{align}
The dynamics of the class-specific mean-field associated with class $k$ are obtained by 
\begin{align}
    \dot{\bar{x}}_{k} &= A_k\bar{x}_k - B_k R_k^{-1}B_k^\top(P_{11,k}\bar{x}_k + P_{12,k}\bar{\mathcal{X}})
\end{align}
and the dynamics of $\bar{\mathcal{X}}$ are given by
\begin{equation}
    \dot{\bar{\mathcal{X}}} = (\mathcal{A} - \mathcal{BR}^{-1}\mathcal{B}^\top(\mathcal{P}_{11}+\mathcal{P}_{12}))\bar{\mathcal{X}}
\end{equation}
where $\mathcal{A, B,R}$ and $\mathcal{P}_{11}$ are block diagonal matrices composed of matrices $A_k, B_k, R_k$ and $P_{11,k}$ for each class $k$, and $\mathcal{P}_{12} = [P_{12,1}^\top,\ldots, P_{12,k}^\top]^\top$.
Thus, the consistency condition for the mean field in the fixed-point approach \citep{Huang:2007} is then equivalently given by  $$\bar{G} = \mathcal{A} - \mathcal{BR}^{-1}\mathcal{B}^\top(\mathcal{P}_{11}+\mathcal{P}_{12}).$$

Plugging in the optimal control back into the HJB and then matching the coefficients with the value function ansatz yield the infinite-time discounted cost Riccati equation
\begin{equation}
\label{eq:diffDE}
    \begin{split}
        &\rho \tilde{P}_k = \tilde{Q}_k - \tilde{P}_k\tilde{B}_k R_k^{-1}\tilde{B}^\top_k\tilde{P}_k + \tilde{A}_k^\top\tilde{P}_k + \tilde{P}_k\tilde{A}_k.
    \end{split}
\end{equation}
Developing the terms, the following equations for $P_{11,k}$ and $P_{12,k}$ are obtained
\begin{align}
    &\rho P_{11,k}   \notag 
    = Q_k - P_{11,k}B_kR_k^{-1} B_k^\top P_{11,k} + A_k^\top P_{11,k} +  P_{11,k}A_k   \\[1ex]
    &\rho P_{12,k} 
    = -Q_k H_k - P_{11,k} B_kR_k^{-1}B_k^\top P_{12,k} + A_k^\top P_{12,k} \notag\\ &\qquad+ P_{12,k}(\mathcal{A} - \mathcal{BR}^{-1}\mathcal{B}^\top(\mathcal{P}_{11}+\mathcal{P}_{12}))  \notag \\
    &\rho P_{22,k}
    = H_k^\top Q_k H_k - P_{12,k}^\top B_kR_k^{-1}B_k^\top P_{12,k} \notag\\&\qquad+\bar{G}^\top P_{22,k} + P_{22,k}\bar{G}.  \notag
\end{align}
Stacking the equations for $P_{11,k}$ for all $\mathcal{K}$ classes yields 
\begin{equation}
\label{eq:RDEP11}
    \rho \mathcal{P}_{11} = \mathcal{Q} + \mathcal{P}_{11}\mathcal{A} + \mathcal{A}^\top \mathcal{P}_{11} - \mathcal{P}_{11}\mathcal{BR}^{-1}\mathcal{B}^\top \mathcal{P}_{11}
\end{equation}
and stacking the equations for $P_{12,k}$ yields
\begin{align}
\label{eq:RDEP12}
    \rho \mathcal{P}_{12} & = -\mathcal{QH}+  (\mathcal{A}^\top-\mathcal{P}_{11}\mathcal{BR}^{-1}\mathcal{B}^\top)\mathcal{P}_{12} \notag \\&\hspace{-6pt}+ \mathcal{P}_{12}(\mathcal{A}-\mathcal{BR}^{-1}\mathcal{B}^\top \mathcal{P}_{11})- \mathcal{P}_{12}\mathcal{BR}^{-1}\mathcal{B}^\top \mathcal{P}_{12}
\end{align}
which are analogous to the equations in \eqref{eq:system_dynamics}, with $\mathcal{P}_{11} = \mathcal{P}$ and $\mathcal{P}_{12} = \Pi$.
Summing \eqref{eq:RDEP11} with \eqref{eq:RDEP12} and defining $\Omega = \mathcal{P}_{11}+\mathcal{P}_{12}$, an ARE for $\Omega$ is obtained
\begin{equation}\label{eq:OmegaAppendix}
    \rho\Omega = \mathcal{Q}(I_N-\mathcal{H}) + \Omega\mathcal{A} + \mathcal{A}^\top\Omega - \Omega\mathcal{BR}^{-1}\mathcal{B}^\top\Omega.
\end{equation}
Under Assumptions \ref{assum:A_B_stabi_A_Q_Obs} and \ref{assum:Hamiltonian_Q}, the pair ($\tilde{A}-\frac{1}{2}\rho I_{N}, \tilde{B}$) is stabilizable, and both \eqref{eq:RDEP11} and \eqref{eq:OmegaAppendix}, and thus also \eqref{eq:RDEP12}, all admit unique stabilizing solutions $\mathcal{P}_{11}, \Omega$ and $\mathcal{P}_{12}$ respectively, following \citep[Thm.~18]{Huang:2019} and Lemma~\ref{lem:HOmega_equivalency}. 
Therefore, the MFG strategy is uniquely given by  \eqref{eq:optimal_control_sol}-\eqref{eq:meanFieldDynamics}.     
This completes the proof.

\section{Lemma used in the Proof of Prop.~\ref{prop:key_equations}} 
Let  $\bar{\mathcal{A}}\triangleq \mathcal{A}-\frac{1}{2}\rho I_N$ and $\mathcal{M}\triangleq\mathcal{BR}^{-1}\mathcal{B}^\top$. Consider the Hamiltonian matrix associated with the ARE in  \eqref{eq:dyn_pi}: 
\begin{equation}\label{eq:H-Pi}
    H_\Pi = \begin{bmatrix}
           \bar{\mathcal{A}}- \mathcal{M}\mathcal{P} & -\mathcal{M} \\
           \mathcal{QH} & -\bar{\mathcal{A}}^\top + \mathcal{PM}^\top
       \end{bmatrix}.
\end{equation}

\begin{lemma} \label{lem:HOmega_equivalency}
   Assume ($\bar{\mathcal{A}}, \mathcal{B}$) is stabilizable and the pair ($\bar{\mathcal{A}}, \mathcal{Q}$) is observable. Then the following hold:
   \begin{enumerate}
    \item $H_\Omega$ in \eqref{eq:matrixHomega} is strong $(N,N)$ c-splitting  if and only if $H_\Pi$ is strong $(N,N)$ c-splitting;
    \item  the $N$-dimensional stable invariant subspace   of $H_\Omega$ in \eqref{eq:matrixHomega} is a graph subspace if and only if the $N$-dimensional stable invariant subspace of $H_\Pi$ is a graph subspace.
\end{enumerate}
\end{lemma}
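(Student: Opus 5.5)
The plan is to show that $H_\Pi$ and $H_\Omega$ are similar through a block lower-triangular transformation built from $\mathcal{P}$, and that this transformation leaves the top block of any basis matrix unchanged. Both claims then follow at once. The standing hypotheses (stabilizability of $(\bar{\mathcal{A}},\mathcal{B}$) and observability of $(\bar{\mathcal{A}},\mathcal{Q})$) are used only to guarantee, by standard LQR theory, that \eqref{eq:ARE_P} has a unique symmetric solution $\mathcal{P}\succeq 0$. This $\mathcal{P}$ is the one appearing in $H_\Pi$. In terms of $\bar{\mathcal{A}}$ and $\mathcal{M}$, the ARE \eqref{eq:ARE_P} reads
\begin{equation*}
\mathcal{P}\bar{\mathcal{A}}+\bar{\mathcal{A}}^\top\mathcal{P} = \mathcal{P}\mathcal{M}\mathcal{P}-\mathcal{Q}.
\end{equation*}
We will also use that $\mathcal{M}=\mathcal{M}^\top$, so $\mathcal{P}\mathcal{M}^\top=\mathcal{P}\mathcal{M}$.

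\textbf{Step 1 (similarity).} Define
\begin{equation*}
T=\begin{bmatrix} I_N & 0\\ -\mathcal{P} & I_N\end{bmatrix},\qquad T^{-1}=\begin{bmatrix} I_N & 0\\ \mathcal{P} & I_N\end{bmatrix}.
\end{equation*}
We compute $T H_\Omega T^{-1}$ block by block.
\begin{itemize}
\item The first block row is $[\bar{\mathcal{A}}-\mathcal{M}\mathcal{P},\ -\mathcal{M}]$.
\item The $(2,2)$ block is $\mathcal{P}\mathcal{M}-\bar{\mathcal{A}}^\top$.
\item The $(2,1)$ block is
\begin{equation*}
-(\mathcal{P}\bar{\mathcal{A}}+\bar{\mathcal{A}}^\top\mathcal{P})+\mathcal{P}\mathcal{M}\mathcal{P}-\mathcal{Q}+\mathcal{Q}\mathcal{H}.
\end{equation*}
Substituting the ARE above, the first three terms cancel against $-\mathcal{Q}$, and this block reduces to $\mathcal{Q}\mathcal{H}$.
\end{itemize}
Hence $T H_\Omega T^{-1}=H_\Pi$, exactly as in \eqref{eq:H-Pi}. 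This cancellation is the one step I expect to need care: it depends on choosing the sign of $\mathcal{P}$ in $T$ correctly and on the symmetry of $\mathcal{M}$. With the opposite sign the $(2,1)$ block does not collapse.

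\textbf{Step 2 (item 1).} Similar matrices have the same spectrum with the same multiplicities. Therefore $H_\Omega$ has $N$ eigenvalues in the open left half plane, $N$ in the open right half plane and none on the imaginary axis if and only if $H_\Pi$ does. This proves the strong $(N,N)$ c-splitting equivalence.

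\textbf{Step 3 (item 2).} Suppose the stable invariant subspace of $H_\Omega$ is spanned by $V=[X^\top,Y^\top]^\top$, with $H_\Omega V=VA_0$ and $A_0$ Hurwitz. Then
\begin{equation*}
H_\Pi(TV)=TH_\Omega V=(TV)A_0,
\end{equation*}
so $TV$ spans an $N$-dimensional stable invariant subspace of $H_\Pi$. Under c-splitting this subspace is unique, since it is the span of the generalized eigenvectors of the stable eigenvalues. Moreover
\begin{equation*}
TV=\begin{bmatrix} X\\ Y-\mathcal{P}X\end{bmatrix}
\end{equation*}
has the same top $N\times N$ block $X$. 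Thus $X$ is invertible for one subspace if and only if it is for the other, which is the graph-subspace equivalence. The converse direction follows by the same argument with $T^{-1}$ in place of $T$.
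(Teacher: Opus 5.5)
Your proposal is correct and follows the paper's proof essentially verbatim: the same lower block-triangular similarity $T$ built from $\mathcal{P}$, the same use of \eqref{eq:ARE_P} to collapse the $(2,1)$ block to $\mathcal{Q}\mathcal{H}$, and the same shared-spectrum argument for item 1. Your Step 3, noting that $TV$ keeps the top block $X$ of $V$, spells out the graph-subspace equivalence that the paper only calls easy to verify; one small quibble is that the $(2,1)$ cancellation does not need $\mathcal{M}=\mathcal{M}^\top$, which is used only to identify the $(2,2)$ block $\mathcal{P}\mathcal{M}$ with $\mathcal{P}\mathcal{M}^\top$.
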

\begin{pf}

The Hamiltonian matrix in \eqref{eq:matrixHomega} is equivalently given by
\begin{equation} \label{eq:H-omega}
    H_{\Omega} = \begin{bmatrix}
         \bar{\mathcal{A}}& -\mathcal{M} \\
          -\mathcal{Q}(I_N-\mathcal{H}) & -\bar{\mathcal{A}}^{\top}
      \end{bmatrix}.
\end{equation}
Then
\begin{equation}\label{eq:H-relation2}
    \begin{bmatrix}
        I_N & 0\\
        -\mathcal{P} & I_N
    \end{bmatrix}
    H_\Omega
      \begin{bmatrix}
        I_N & 0\\
        \mathcal{P} & I_N
    \end{bmatrix} =
    \begin{bmatrix}
        \bar{\mathcal{A}}-\mathcal{MP} & -\mathcal{M}\\
        Z & -\bar{\mathcal{A}}^\top+\mathcal{PM}
    \end{bmatrix},
\end{equation}
with $Z \triangleq -\mathcal{P}\bar{\mathcal{A}}+\mathcal{PMP}-\mathcal{Q}(I_N-\mathcal{H}) - \bar{\mathcal{A}}^\top\mathcal{P}$. 
The stabilizability of ($\bar{\mathcal{A}}, \mathcal{B}$) and the observability of ($\bar{\mathcal{A}}, \mathcal{Q}$) ensure that~\eqref{eq:ARE_P} has a unique positive definite solution \citep[Thm.~4.1]{Wonham:1968}. 
Simplifying $Z$ with \eqref{eq:ARE_P} yields $Z = \mathcal{QH}$. The right hand side of \eqref{eq:H-relation2} becomes $H_\Pi$ in \eqref{eq:H-Pi}.
In addtion, we note that 
\begin{equation}
    \begin{bmatrix}
        I_N & 0 \\
        -\mathcal{P} & I_N
    \end{bmatrix}\begin{bmatrix}
        I_N & 0 \\
        \mathcal{P} & I_N
    \end{bmatrix} = I_{2N}.
\end{equation}
Hence, $H_\Omega$ and $H_\Pi$ are similar matrices. As a consequence, $H_\Omega$ and $H_\Pi$ share the same eigenvalues, and hence $H_\Omega$ is strong $(N,N)$ c-splitting if and only if $H_\Pi$ is strong $(N,N)$ c-splitting.
By the definition of stable graph subspace, it is easy to verify that
    the $N$-dimensional stable invariant subspace of $H_\Omega$ is a graph subspace if and only if the $N$-dimensional stable invariant subspace of $H_\Pi$ is a graph subspace.
\end{pf}

\end{document}